 \newcommand{\Rmnum}[1]{\expandafter\@slowromancap\romannumeral #1@}
\newtheorem{theorem}{Theorem}[section]
\newtheorem{proposition}[theorem]{Proposition}
\newtheorem{remark}[theorem]{Remark}
\newcommand{\R}{{\mathbb R}}
\newcommand{\be}{\begin{equation}}
\newcommand{\ee}{\end{equation}}
\newcommand{\bea}{\begin{eqnarray}}
\newcommand{\eea}{\end{eqnarray}}
\newcommand{\ba}{\begin{array}}
\newcommand{\ea}{\end{array}}
\newcommand{\ol}{\overline}
\newcommand{\id}{\mathbb{I}}
\newcommand{\im}{\mathrm{Im}}
\newcommand{\sig}{\sigma}
\newcommand{\lam}{\lambda}
\newcommand{\gam}{\gamma}
\newcommand{\Gam}{\Gamma}
\newcommand{\x}{\xi}
\newcommand{\tha}{\theta}
\numberwithin{equation}{section}
\begin{document}
\title[RHP for the mCH equation]{Long-time asyptotics behavior for the integrable modified Camassa-Holm equation with cubic nonlinearity}

\author[J.Xu]{Jian Xu*}
\address{College of Science\\
University of Shanghai for Science and Technology\\
Shanghai 200093\\
People's  Republic of China}
\email{corresponding author: jianxu@usst.edu.cn}

\author[E.Fan]{Engui Fan}
\address{School of Mathematical Sciences, Institute of Mathematics and Key Laboratory of Mathematics for Nonlinear Science\\
Fudan University\\
Shanghai 200433\\
People's  Republic of China}
\email{faneg@fudan.edu.cn}

\keywords{Riemann-Hilbert problem, Modified Camassa-Holm equation, Initial value problem, Long-time asymptotic}

\date{\today}

\begin{abstract}
In this paper, we investigate the long-time asymptotic behavior of the solution to the initial value problem for the modified Camassa-Holm (mCH) equation with cubic nonlinearity. The equation is known to be integrable, which we mean it admits an Lax pair. We formulate the initial value problem as an associate vector Riemann-Hilbert problem, which allows us to give a parametric representation of the solution to the initial value problem in terms of the solution of the Riemann-Hilbert problem. And then by adopting the nonlinear steepest descent method, we can get the explicit leading order asymptotic of the solution as time goes to infinity.

\end{abstract}

\maketitle

\section{Introduction}
In this paper, we concern the initial value problem (IVP) for the modified Camassa-Holm (mCH) equation:
\begin{subequations}
\be\label{mche}
m_{t}+(m(u^2-u^2_{x}))_{x}+\kappa u_{x}=0,\quad x\in \R,t>0,\quad m=u-u_{xx}.
\ee
\be\label{intdata}
u(x,0)=u_0(x),\quad x\in\R,
\ee
\end{subequations}
where $u=u(x,t)$ is a real-valued function of spatial variable $x$ and time $t$, and the subscripts $x$
and $t$ appended to $m$ and $u$ denote partial differentiation, and $\kappa$ is a positive constant.
The equation (\ref{mche}) was proposed as a new integrable system by Fuchssteiner \cite{fpd1996} and Olver and Rosenau \cite{orpre1996} by applying the general method of tri-Hamiltonian duality to the bi-Hamiltonian representation
of the modified Korteweg-deVries equation. Later, it was obtained by Qiao \cite{qjmp2006}
from the two-dimensional Euler equations, where the variables $u(x,t)$ and $m(x,t)$ represent,
respectively, the velocity of the fluid and its potential density. In many literatures, for instance \cite{gloqcmp2013,matsuno2014}, the mCH equation was said can be solved by the method of inverse scattering because it admits a Lax pair \cite{qltmp2011}. To the authors knowledge, however, there are no articles to construct the solution $u(x,t)$ of the mCH equation (\ref{mche}) by using inverse scattering transform method. In this paper, we use a similar way as the spectral analysis of the short pulse equation in \cite{jjde2018} to formulate the initial value problem to a Riemann-Hilbert problem, as the Lax pair of the mCH equation (\ref{mche}) is the Wadati-Konno-Ichikawa (WKI)-type, too.
And then, we derive the leading order asymptotic behavior of the solution $u(x,t)$ as $t\rightarrow \infty$ by using the nonlinear steepest descent method.
\par

%
%
%
\par
{\bf Organization of the paper:} In section 2, since the associated Lax pair of mCH equation has singularities at $\lambda=0$ and $\lambda=\infty$, we perform the spectral analysis to deal with the two singularities, respectively. However, we just formulate the associated vector Riemann-Hilbert in an alternative space variable $y$ instead of the original space variable $x$. Hence, we can reconstruct the solution $u(x,t)$ parameterized from the solution of the Riemann-Hilbert problem via the asymptotic behavior of the spectral variable at $\lambda=0$. Fortunately, we can also obtain the asymptotic relation between $y$ and $x$ when analyzing the vector Riemann-Hilbert problem by using the nonlinear steepest descent method. Hence, we can calculate the leading order asymptotic behavior of the solution $u(x,t)$ in section 3. Then, in section 4, we obtain the soliton solutions under the assumption that the spectral data $a(k)$ has finite simple poles.

\section{Riemann-Hilbert Problem}

In this section, we show the solution of the IVP for mCH equation (\ref{mche}) can be constructed in terms of the solution of a Rieamnn-Hilbert (RH) problem.
The mCH equation is an integrable nonlinear partial differential equation which admits the following Lax pair,
\begin{subequations}\label{psilax}
\be
\Psi_x(x,t,\lam)=U(x,t,\lam)\Psi,
\ee
\be
\Psi_t(x,t,\lam)=V(x,t,\lam)\Psi,
\ee
\end{subequations}
where
\begin{subequations}
\be
U(x,t,\lam)=\frac{1}{2}\left(\ba{cc}-Q&\lam m(x,t)\\-\lam m(x,t)&Q\ea\right),
\ee
\be
V(x,t,\lam)=\left(\ba{cc}\frac{Q}{\lam^2}+\frac{1}{2}Q(u^2-u^2_{x})&-\frac{u-Qu_{x}}{\lam}-\frac{1}{2}\lam(u^2-u^2_{x})m\\ \frac{u+Qu_x}{\lam}+\frac{1}{2}\lam(u^2-u^2_{x})m&-\frac{Q}{\lam^2}-\frac{1}{2}Q(u^2-u^2_{x})\ea\right),
\ee
\end{subequations}
with
\be
Q=Q(\lam,\kappa)=\sqrt{1-\frac{1}{2}\kappa \lam^2}.
\ee

If we introduce the following transformations
\be
\left\{
\ba{l}
x=\tilde x,\\
t=\frac{2}{\kappa}\tilde t,\\
u(x,t)=\sqrt{\frac{\kappa}{2}}\tilde u(\tilde x,\tilde t),
\ea
\right.
\ee
then, the mCH equation (\ref{mche}) becomes
\[
\tilde m_{\tilde t}+(\tilde m(\tilde u^2-\tilde u^2_{\tilde x}))_{\tilde x}+2\tilde u_{\tilde x}=0.
\]

Hence, without loss of generally, we can assume that $\kappa=2$ in the following. And we assume that the initial value $u_0(x)$ lies in Schwarz space.
\par
Then,
\be
U(x,t,\lam)=\frac{1}{2}\left(\ba{cc}-Q&\lam m(x,t)\\-\lam m(x,t)&Q\ea\right),\quad Q=Q(\lam,\kappa=2)=\sqrt{1-\lam^2}
\ee

We know that there are two singularity points at $\lam=0$ and $\lam=\infty$. Hence, we need two different transformations to control the behavior of the eigenfunctions at these two points, respectively.

\par
To over come the multi-value of the square root and avoid introducing the Riemann surface, we introduce an new spectral variable via the following transformation,
\be\label{Qklamk}
Q(k)=\frac{i}{2}(k-\frac{1}{k}),\quad \lam(k)=\frac{1}{2}(k+\frac{1}{k}).
\ee

\subsection{Spectral analysis at $\lam= \infty$}

Denote the Pauli matrices as
\be
\ba{lll}
\sig_1=\left(\ba{cc}0&1\\1&0\ea\right),&\sig_2=\left(\ba{cc}0&-i\\i&0\ea\right),&\sig_3=\left(\ba{cc}1&0\\0&-1\ea\right)
\ea
\ee

It is obviously known from (\ref{Qklamk}) that $k\rightarrow \infty$ and $k\rightarrow 0$ correspond to $\lam\rightarrow \infty$. So, we need control the behavior as $k\rightarrow \infty$ and $k\rightarrow 0$, respectively, to control the behavior of $\lam\rightarrow \infty$.
\par

{\bf Firstly, we consider the case as $k\rightarrow \infty$.}

Let
\be\label{Gdef}
G(x,t)=\sqrt{\frac{\sqrt{m^2+1}+1}{2\sqrt{m^2+1}}}\left(\ba{cc}1&\frac{-im}{\sqrt{m^2+1}+1}\\\frac{-im}{\sqrt{m^2+1}+1}&1\ea\right),
\ee
and
\be
p(x,t,k)=x-\int_{x}^{\infty}(\sqrt{m^2(x',t)+1}-1)dx'-\frac{2t}{\lam(k)^2}.
\ee


Making a transformation as
\be\label{psirelmu}
\Psi(x,t,k)=G(x,t)\mu(x,t,k)e^{-\frac{Q(k)}{2} p(x,t,k)\sig_3},
\ee
then
\be\label{mulaxpair}
\left\{
\ba{l}
\mu_x+\frac{Q(k)}{2}p_x[\sig_3,\mu]=\tilde{U}(x,t,k)\mu,\\
\mu_t+\frac{Q(k)}{2}p_t[\sig_3,\mu]=\tilde{V}(x,t,k)\mu,
\ea
\right.
\ee
where
\begin{subequations}
\be
\tilde{U}(x,t,k)=\frac{im_x}{2(m^2+1)}\left(\ba{cc}0&1\\1&0\ea\right)+\frac{1}{2k}\frac{m}{\sqrt{m^2+1}}\left(\ba{cc}-im&1\\-1&im\ea\right),
\ee
\be
\ba{l}
\tilde{V}(x,t,k)=\frac{im_t}{2(m^2+1)}\left(\ba{cc}0&1\\1&0\ea\right)-\frac{m(u^2-u^2_{x})}{2k\sqrt{m^2+1}}\left(\ba{cc}-im&1\\-1&im\ea\right)+\frac{(k^2-1)u_x}{k^2+1}\left(\ba{cc}0&1\\1&0\ea\right)\\
{}-\frac{2ku}{(k^2+1)\sqrt{m^2+1}}\left(\ba{cc}-im&1\\-1&im\ea\right)+\frac{2ik(k^2-1)}{(k^2+1)^2}\left(\ba{cc}\frac{1}{\sqrt{m^2+1}}-1&\frac{-im}{\sqrt{m^2+1}}\\\frac{im}{\sqrt{m^2+1}}&1-\frac{1}{\sqrt{m^2+1}}\ea\right).
\ea
\ee
\end{subequations}

\subsubsection{Eigenfunctions $\mu_j(x,t,k)$}

Define two eigenfunctions $\mu_{j}(x,t,k),j=1,2$,
\begin{subequations}\label{mujdef}
\be\label{mu1def}
\mu_{1}(x,t,k)=\id+\int_{-\infty}^{x}e^{-\frac{Q(k)}{2}(p(x,t,k)-p(y,t,k))\hat\sig_3}(\tilde{U}(y,t,k)\mu_{1}(y,t,k))dy,
\ee
\be\label{mu2def}
\mu_{2}(x,t,k)=\id+\int_{\infty}^{x}e^{-\frac{Q(k)}{2}(p(x,t,k)-p(y,t,k))\hat\sig_3}(\tilde{U}(y,t,k)\mu_{2}(y,t,k))dy,
\ee
\end{subequations}

\begin{proposition}[Analytic property]\label{Anlyper}
Then, $\{\mu_{j}(x,t,k)\}_{j=1}^{2}$ satisfy the following the bounded and analytic properties,
\be
\left\{
\ba{l}
\mu_{1}\in(D_1,D_2)\\
\mu_{2}\in(D_2,D_1).
\ea
\right.
\ee
Here, $D_1$ and $D_2$ denote the upper and lower half-plane, respectively.
\end{proposition}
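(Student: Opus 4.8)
The plan is to solve the Volterra integral equations (\ref{mujdef}) by successive approximation and to read the analyticity and boundedness of each column directly off the sign of $\re Q(k)$ together with the monotonicity of $p$ in $x$. Two elementary computations set this up. Differentiating the definition of $p$ gives $p_x=\sqrt{m^2+1}>0$, so the quantity $p(x,t,k)-p(y,t,k)$ is real, independent of $k$, and strictly increasing in $x$; consequently it is nonnegative on the region $y\le x$ that governs $\mu_1$ and nonpositive on the region $y\ge x$ that governs $\mu_2$. Writing $k=Re^{i\phi}$ in (\ref{Qklamk}) gives
\[
\re Q(k)=-\tfrac12\Big(R+\tfrac1R\Big)\sin\phi ,
\]
so that $\re Q(k)<0$ exactly in the upper half-plane $D_1$ and $\re Q(k)>0$ exactly in the lower half-plane $D_2$.

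Next I would track how the exponential kernel acts. Since $\hat\sig_3$ has eigenvalues $+2$ and $-2$ on the $(1,2)$- and $(2,1)$-type matrices, the first column of $e^{-\frac{Q(k)}{2}(p(x)-p(y))\hat\sig_3}A$ equals $(A_{11},\,e^{Q(k)(p(x)-p(y))}A_{21})^{T}$ and the second column equals $(e^{-Q(k)(p(x)-p(y))}A_{12},\,A_{22})^{T}$. Feeding this into the Neumann series for $\mu_1$ and ordering the iterated integrals as $-\infty<\cdots<y_2<y_1<x$, every consecutive difference $p(y_j)-p(y_{j+1})$ is nonnegative, so after pairing the off-diagonal contributions the first column accumulates only factors $e^{Q(k)(\,\cdot\,)}$ whose exponent argument is nonnegative; their modulus $e^{\re Q(k)(\,\cdot\,)}$ stays bounded precisely when $\re Q(k)\le 0$, i.e. for $k\in D_1$, while the second column stays bounded for $k\in D_2$. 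The same bookkeeping for $\mu_2$, where now $p(x)-p(y)\le 0$, interchanges the roles and yields boundedness of the first column in $D_2$ and of the second in $D_1$. Because $\tilde U(y,t,k)$ is rational in $k$ (analytic off $k=0$) and decays rapidly in $y$ thanks to the Schwartz assumption on $u_0$, each iterate is holomorphic in $k$ on the relevant open half-plane, and the uniform bounds just obtained make the series converge locally uniformly, so the limits $\mu_1,\mu_2$ are holomorphic there. This gives $\mu_1\in(D_1,D_2)$ and $\mu_2\in(D_2,D_1)$.

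The step I expect to be the real obstacle is the convergence and uniform-boundedness estimate underlying analyticity: since $\tilde U$ carries a $\tfrac1k$ factor, the iterated $y$-integrals must be estimated with care as $k$ approaches $0$ inside each half-plane, and one must verify that the decay of $m$ and $m_x$ forces the Neumann series to converge uniformly on compact subsets of $D_1$ and $D_2$ away from $k=0$. Once a Weierstrass-type majorant is in hand, analyticity of the limit follows term by term (for instance by Morera's theorem), and all the boundedness claims are immediate from the sign analysis recorded above.
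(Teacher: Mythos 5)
Your argument is correct: the sign computation $\re Q(k)=-\tfrac12(R+\tfrac1R)\sin\phi$, the monotonicity $p_x=\sqrt{m^2+1}>0$, and the column-wise Volterra/Neumann-series bookkeeping are exactly the standard route to this proposition, which the paper states without giving any proof. Your closing caveat about the $1/k$ singularity of $\tilde U$ is well placed but harmless here, since $k=0$ lies on $\R$, outside the open half-planes $D_1$ and $D_2$ where analyticity is claimed.
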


\par

\begin{proposition}[Symmetry peoperty]
The functions $\mu_j(x,t,k)$ have the following symmetry conditions:
\be\label{symrel}
\sig_{1}\mu_{\pm}(x,t,- k)\sig_1=\mu_{\pm}(x,t,k),\quad
\sig_{2}\ol{\mu_{\pm}(x,t,\bar k)}\sig_2=\mu_{\pm}(x,t,k).
\ee
\end{proposition}
\begin{proof}
From the definition (\ref{mujdef}) of the functions $\mu_j(x,t,k)$, a direct computation shows (\ref{symrel}) holds.
\end{proof}

\subsubsection{Scattering matrix}
The eigenfunctions $\mu_1(x,t,k)$ and $\mu_2(x,t,k)$, which are not independent, satisfy the relation for some matrix $s(k)$ independent of $(x,t)$,
\be\label{Sk}
\mu_{-}(x,t,k)=\mu_{+}(x,t,k)e^{-\frac{Q(k)}{2}p(x,t,k)\hat\sig_3}s(k),
\ee
where
\be\label{Skdef}
s(k)=\left(\ba{cc}a(k)&-\ol{b(\bar k)}\\b(k)&\ol{a(\bar k)}\ea\right).
\ee
Then, from (\ref{Sk}), we have
\be
a(k)=\det{([\mu_{-}]_1,[\mu_{+}]_2)}.
\ee
By the analytic property (\ref{Anlyper}), we know that $a(k)$ is analytic in $D_1$.
\par

\subsubsection{Spectral analysis at $k=0$}
We know that $\lam(k)$ remains the same if we use $\frac{1}{k}$ to replace $k$, from the definition (\ref{Qklamk}) of $\lam(k)$.  Hence, we can use this fact to analyse the behavior at $k=0$.
\par
From the Lax pair of $\Psi(x,t,\lam(k))$ (\ref{psilax}), we know that
\be\label{psi1krelpsik}
\Psi(x,t\frac{1}{k})=\sig_2\Psi(x,t,k)\sig_2.
\ee

Then, by the transformation (\ref{psirelmu}) it implies that
\be\label{mu1krelmuk}
\mu_(x,t,\frac{1}{k})=(G^{-1}(x,t))^{2}\sig_2\mu(x,t,k)\sig_2,
\ee
here we use the relation $\sig_2 G(x,t) \sig_2=(G(x,t))^{-1}$ in view of the definition of $G(x,t)$ (\ref{Gdef}).
\par
This equation (\ref{mu1krelmuk}), the relation between $\mu(x,t,k)$ and $S(k)$ (\ref{Sk}) and the symmetry conditions (\ref{sym1}) imply the following proposition.
\begin{proposition}
The functions $\mu(x,t,k)$ and $s(k)$ satisfy the following symmetry conditions:
\begin{subequations}
\be
\mu(x,t,-\frac{1}{k})=(G^{-1}(x,t))^{2}\sig_3\mu(x,t,k)\sig_3,\quad \ol{\mu(x,t,\frac{1}{\bar k})}=(G(x,t))^2\mu(x,t,k),
\ee
\be\label{s1krelsk}
s(-\frac{1}{k})=\sig_3 s(k)\sig_3,\quad \ol{s(\frac{1}{\bar k})}=s(k).
\ee
\end{subequations}
\end{proposition}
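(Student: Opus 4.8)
The plan is to propagate the two elementary symmetries (\ref{symrel}) through the inversion $k\mapsto 1/k$ encoded in (\ref{mu1krelmuk}), and then transport the resulting $\mu$-relations across the scattering relation (\ref{Sk}) to obtain those for $s(k)$. For the first $\mu$-relation I would start from (\ref{mu1krelmuk}) with $k$ replaced by $-k$, namely $\mu(x,t,-1/k)=(G^{-1})^2\sig_2\mu(x,t,-k)\sig_2$, and insert $\mu(x,t,-k)=\sig_1\mu(x,t,k)\sig_1$ from the first identity in (\ref{symrel}). Since $\sig_2\sig_1=-i\sig_3$ and $\sig_1\sig_2=i\sig_3$, the sandwich $\sig_2\sig_1(\cdot)\sig_1\sig_2$ collapses to $\sig_3(\cdot)\sig_3$, giving $\mu(x,t,-1/k)=(G^{-1})^2\sig_3\mu(x,t,k)\sig_3$, valid for each of $\mu_\pm$.

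For the conjugation relation I would take (\ref{mu1krelmuk}) at $\bar k$ and conjugate it. Because $m$ is real, $\ol{\sig_2}=-\sig_2$, so the two conjugated $\sig_2$ factors reassemble unchanged, and the second identity in (\ref{symrel}), in the form $\ol{\mu(x,t,\bar k)}=\sig_2\mu(x,t,k)\sig_2$, leaves $\ol{\mu(x,t,1/\bar k)}=\ol{(G^{-1})^2}\,\mu(x,t,k)$. The key algebraic input is then $\ol{G^2}=G^{-2}$, equivalently $\ol{(G^{-1})^2}=G^2$. I would verify this by squaring (\ref{Gdef}): using $m^2=(\sqrt{m^2+1})^2-1$ one gets $G^2=\tfrac{1}{\sqrt{m^2+1}}\left(\ba{cc}1&-im\\-im&1\ea\right)$, and a one-line multiplication gives $G^2\,\ol{G^2}=\id$. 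Substituting converts $\ol{(G^{-1})^2}$ into $G^2$ and yields $\ol{\mu(x,t,1/\bar k)}=(G(x,t))^2\mu(x,t,k)$.

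To pass to $s(k)$ I would first record how the phase $E:=e^{\frac{Q(k)}{2}p(x,t,k)\sig_3}$ behaves. From (\ref{Qklamk}), $Q(-1/k)=Q(k)$ and $\lam(-1/k)^2=\lam(k)^2$, so $p(-1/k)=p(k)$ and $E$ is invariant under $k\mapsto-1/k$; similarly $\ol{Q(1/\bar k)}=Q(k)$ and $\ol{\lam(1/\bar k)}=\lam(k)$ give $\ol{E(1/\bar k)}=E$. Writing (\ref{Sk}) as $\mu_-=\mu_+E^{-1}s(k)E$, I would substitute the $-1/k$ relation for both $\mu_\pm$, cancel the common left factor $(G^{-1})^2\sig_3\mu_+$ together with the diagonal (hence $\sig_3$-commuting) factors $E$, and read off $s(-1/k)=\sig_3 s(k)\sig_3$. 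Conjugating (\ref{Sk}) at $1/\bar k$, inserting $\ol{\mu(1/\bar k)}=G^2\mu$, and cancelling the common $G^2\mu_+$ and the conjugate-invariant $E$ gives $\ol{s(1/\bar k)}=s(k)$.

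The Pauli-matrix bookkeeping is mechanical; the two places that need genuine care are the conjugation identity $\ol{G^2}=G^{-2}$, which relies on the precise form of $G$ in (\ref{Gdef}) and on $m$ being real, and the invariance of the phase $E$ under both $k\mapsto-1/k$ and $k\mapsto1/\bar k$. If $E$ failed to be (conjugate-)invariant, an extra $(x,t,k)$-dependent factor would survive the cancellation and the clean reflection/conjugation relations for $s(k)$ would not hold; this is where I expect the main subtlety to lie.
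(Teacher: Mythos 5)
Your proposal is correct and follows exactly the route the paper intends: it combines the inversion relation (\ref{mu1krelmuk}) with the symmetries (\ref{symrel}) and the scattering relation (\ref{Sk}), which is all the paper itself offers as justification in the single sentence preceding the proposition. Your write-up simply supplies the details the paper omits (the Pauli-matrix collapse $\sig_2\sig_1(\cdot)\sig_1\sig_2=\sig_3(\cdot)\sig_3$, the identity $\ol{G^2}=G^{-2}$, and the invariance of the exponential phase under $k\mapsto-1/k$ and $k\mapsto 1/\bar k$), and all of these check out.
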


\subsection{pre-Riemann-Hilbert Problem}
Define
\be
M(x,t,k)=\left\{
\ba{ll}
(\frac{[\mu_{-}]_1}{a(k)},[\mu_{+}]_2),&\im k>0,\\
([\mu_{+}]_1,\frac{[\mu_{-}]_2}{\ol{a(\bar k)}}),&\im k<0.
\ea
\right.
\ee

\par
Then, we can show $M(x,t,k)$ satisfies the Riemann-Hilbert problem:
\begin{itemize}\label{Mxtrhp}
\item Jump condition:
\be
M_+(x,t,k)=M_-(x,t,k)\left(\ba{cc}1+|r(k)|^2&\ol{r(k)}e^{-Q(k) p(x,t,\lam)}\\r(k)e^{Q(k) p(x,t,\lam)}&1\ea\right),\quad k\in \R,
\ee
where
\be
r(k)=\frac{b(k)}{a(k)}.
\ee
\item Normalize condition˙
\be\label{Mxkinfasy}
M(x,t,k)=\id+\frac{D^{(1)}(x,t)}{k}+O(k^{-2}),\quad k\rightarrow \infty,
\ee
where the off-diagonal entries of the matrix $D^{(1)}(x,t)$ are
\be
D^{(1)}_{12}=\frac{m_x}{(1+m^2)^{\frac{3}{2}}},\quad D^{(1)}_{21}=-\frac{m_x}{(1+m^2)^{\frac{3}{2}}}.
\ee
\end{itemize}
\begin{remark}\label{mxkasy}
The asymptotic expansion formula (\ref{Mxkinfasy}) of $M(x,t,k)$ as $k\rightarrow \infty$ can be derived by substituting the following expansion
\be\label{Mxkasyexp}
M(x,t,k)=D^{(0)}(x,t)+\frac{D^{(1)}(x,t)}{k}+O(k^{-2}),\quad k\rightarrow \infty,
\ee
into the equations (\ref{mulaxpair}), and comparing the order of $k$.
\end{remark}

It shows that it has some difficulties if we want to construct the solution $u(x,t)$ of the mCH equation (\ref{mche}) in terms of the solution of the Riemann-Hilbert problem of $M(x,t,k)$, i.e, (\ref{Mxtrhp}), as $k\rightarrow \infty$, in view of the asymptotic behavior (\ref{Mxkinfasy}) of $M(x,t,k)$. Hence, we should choose another expansion formula of $M(x,t,k)$ to contain the information of the solution $u(x,t)$. Because the Lax pair (\ref{mulaxpair}) have other singularities at $k=\pm i$ corresponding to $\lam=0$, we can control the behavior of the eigenfunctions $\mu_j(x,t,k)$ at $k=\pm i$ to get the information of $u(x,t)$. 

\subsection{Spectral analysis at ($\lam=0$)}
Define another transformation as follows,
\be
\Psi=\mu^{0}e^{(-\frac{Q(k)}{2}x+\frac{Q(k)}{\lam^2(k)}t)\sig_3}
\ee
Then, the Lax pair of $\mu^{0}$ is
\be
\left\{
\ba{l}
\mu_x^{0}+\frac{Q(k)}{2}[\sig_3,\mu^{0}]=U^{0}\mu^{0},\\
\mu_t^{0}-\frac{Q(k)}{\lam^2(k)}[\sig_3,\mu^{0}]=V^{0}\mu^{0},
\ea
\right.
\ee
where
\begin{subequations}
\be
U^{0}=\frac{\lam(k)}{2}m\left(\ba{cc}0&1\\-1&0\ea\right),
\ee
\be
V^{0}=\frac{1}{2}(u^2-u_x^2)\left(\ba{cc}Q(k)&-\lam(k)m\\\lam(k)m&-Q(k)\ea\right)+\frac{u}{\lam(k)}\left(\ba{cc}0&-1\\1&0\ea\right)+\frac{Q(k)}{\lam(k)}u_x\sig_1
\ee
\end{subequations}

As the definitions of $\mu_j(x,t,k)$ (\ref{mujdef}), we can define two eigenfunctions $\{\mu^{(0)}_{j}(x,t,\lam)\}_{j=1}^{2}$. 

Then, a similar computation as (\ref{mxkasy}) shows that the asymptotic behavior of the eigenfunctions $\mu^{0}(x,t,k)$ as $k\rightarrow i$,
\be\label{mu0asyk}
\mu^{0}=\id+\left(\ba{cc}0&-\frac{1}{2}(u+u_x)\\-\frac{1}{2}(u-u_x)&0\ea\right)(k-i)+O\left((k-i)^2\right).
\ee

\subsection{The relation between $\mu(x,t,k)$ and $\mu^{0}(x,t,k)$}
Notice that $\mu_{j}(x,t,k)$ and $\mu^{0}_{j}(x,t,k)$ are the solutions to the same equation about $\Psi(x,t,k)$, then they satisfy the following relation,
\be
\left\{
\ba{l}
\mu_{1}(x,t,k)=G^{-1}(x,t)\mu^{0}_{1}(x,t,k)e^{\frac{Q(k)}{2}\int_{-\infty}^{x}(\sqrt{m^2(x',t)+1})dx'\sig_3},\\
\mu_{2}(x,t,k)=G^{-1}(x,t)\mu^{0}_{2}(x,t,k)e^{-\frac{Q(k)}{2}\int_{x}^{+\infty}(\sqrt{m^2(x',t)+1})dx'\sig_3}.
\ea
\right.
\ee

Hence, from the definition of $a(k)=\det{([\mu_{-}]_1,[\mu_{+}]_2)}$,
we have
\be
a(k)=e^{-\frac{1}{2}\int_{-\infty}^{+\infty}(\sqrt{m^2+1}-1)dx}(1+O((k-i)^2)),\quad k\rightarrow i.
\ee
So, we have the asymptotic behavior of $M(x,t,k)$ as $k\rightarrow i$,
\be
M(x,t,k)=G^{-1}\left[\id+\left(\ba{cc}0&-\frac{1}{2}(u+u_x)\\-\frac{1}{2}(u-u_x)&0\ea\right)(k-i)+O\left((k-i)^2\right)\right]e^{\frac{1}{2}c_+\sig_3}
\ee
where
\be
c_+=\int_{x}^{+\infty}(\sqrt{m^2(x',t)+1}-1)dx'.
\ee

%
%
%
%
%

So, define
\be
\tilde M(x,t,k)=\left(\ba{cc}1&1\ea\right)M(x,t,k),
\ee
then, as $k\rightarrow i$, we have
\be
\ba{l}
\tilde M_1(x,t,k)=a(1+b)\left[1-\frac{1}{2}(u-u_x)(k-i)+O((k-i)^2)\right]e^{\frac{1}{2}c_+},\\
\tilde M_2(x,t,k)=a(1+b)\left[1-\frac{1}{2}(u+u_x)(k-i)+O((k-i)^2)\right]e^{-\frac{1}{2}c_+}
\ea
\ee
where
\be
a=\sqrt{\frac{\sqrt{m^2+1}+1}{2\sqrt{m^2+1}}},\quad b=\frac{im}{\sqrt{m^2+1}+1}.
\ee

Then,
\be
\ba{l}
\frac{\tilde M_2(x,t,i)}{\tilde M_1(x,t,i)}=e^{-c_+},\\
\tilde M_1(x,t,k)\times \tilde M_2(x,t,k)=a^2(1+b)^2\left[1-u(k-i)+O((k-i)^2)\right],\\
\tilde M_1(x,t,i)\times \tilde M_2(x,t,i)=a^2(1+b)^2
\ea
\ee
Hence, we have
\be
u(x,t)=\lim_{k\rightarrow i}\frac{1}{k-i}\left(1-\frac{\tilde M_1(x,t,k)\times \tilde M_2(x,t,k)}{\tilde M_1(x,t,i)\times \tilde M_2(x,t,i)}\right)
\ee
and
\be
c_+=-\ln{\left(\frac{\tilde M_2(x,t,i)}{\tilde M_1(x,t,i)}\right)}.
\ee

\subsection{Riemann-Hilbert problem}
The Riemann-Hilbert problem for $M(x,t,k)$ cannot
be used immediately for recovering the solution of mCH equation (\ref{mche}). Since, in the representation of the jump matrix
$e^{-\frac{Q(k)}{2}p(x,t,k)\hat \sig_3}J_0(k)$ the factor $J_0(k)$ is indeed given in terms of the known initial
data $u_0(x)$ but $p(x,t,k)$ is not, it involves $m(x,t)$ which is unknown (and, in fact, is
to be reconstructed).
\par
To overcome this, we introduce the new (time-dependent) scale
\be\label{ydef}
y(x,t)=x-\int_{x}^{+\infty}(\sqrt{m^2(x',t)+1}-1)dx'=x-c_+(x,t).
\ee
in terms of which the jump matrix becomes explicit.
The price to pay for this, however, is that the solution of the initial problem
can be given only implicitly, or parametrically: it will be given in terms of
functions in the new scale, whereas the original scale will also be given in terms
of functions in the new scale.

Define
\be
M(y,t,k)=M(x(y,t),t,k),\quad \tilde M(y,t,k)=\tilde M(x(y,t),t,k).
\ee

Then, we can get the Riemann-Hilbert problem for the new variable $(y,t)$,
\be\label{RHPmy}
M_+(y,t,k)=M_-(y,t,k)e^{-\frac{Q(k)}{2}(y-\frac{2}{\lam^2(k)}t)\hat\sig_3}\left(\ba{cc}1+|r(k)|^2&\ol{r(k)}\\r(k)&1\ea\right),\quad k\in \R,
\ee
where
\be
r(k)=\frac{b(k)}{a(k)}.
\ee
And from the equations (\ref{mulaxpair}), we can find that
\be
M(y,t,k)=\id+O(k^{-1}),\quad k\rightarrow \infty.
\ee
Then, the solution can be obtained as follows,
\begin{subequations}\label{usol}
\be\label{usolyt}
u(y,t)=\lim_{k\rightarrow i}\frac{1}{k-i}\left(1-\frac{\tilde M_1(y,t,k)\times \tilde M_2(y,t,k)}{\tilde M_1(y,t,i)\times \tilde M_2(y,t,i)}\right)
\ee
and
\be\label{usolytox}
x=y+c_+
\ee
with
\be\label{usolc}
c_+=-\ln{\left(\frac{\tilde M_2(y,t,i)}{\tilde M_1(y,t,i)}\right)}.
\ee
\end{subequations}

\section{Long-time asymptotic}

In this section, we use the nonlinear steepest descent method to derive the asymptotic behavior of the solution $u(x,t)$ of the mCH (\ref{mche}) as time $t$ goes to positive infinity. And we assume that there is no zero point of $a(k)$ such that it will make our analysis easily in technical.
\par
Let us recall the Riemann-Hilbert problem of $M(y,t,k)$ (\ref{RHPmy}) obtained in above section.
\be
\left\{
\ba{ll}
M_+(y,t,k)=M_-(y,t,k)J(y,t,k),&k\in \R,\\
M(y,t,k)\rightarrow \id,&k\rightarrow \infty,
\ea
\right.
\ee
where
\be
J(y,t,k)=e^{-\frac{i}{4}(k-\frac{1}{k})[y-\frac{2t}{\frac{1}{4}(k+\frac{1}{k})^2}]\hat \sig_3}J_0(k),
\ee
with
\be
J_0(k)=\left(\ba{cc}1+|r(k)|^2&\ol{r(k)}\\r(k)&1\ea\right).
\ee
Define
\be
\tha(y,t,k)=-\frac{1}{4}(k-\frac{1}{k})[\frac{y}{t}-\frac{2}{\frac{1}{4}(k+\frac{1}{k})^2}],
\ee
and
\be
\tilde \xi=\frac{y}{t},\quad \tilde k(k)=-\frac{1}{4}(k-\frac{1}{k}).
\ee
Then,
\be
\tha(y,t,k)=\tilde k(k)[\tilde \x-\frac{2}{1+4\tilde k(k)^2}].
\ee
Denote $\tha(y,t,k)$ by $\tha(\tilde \x,\tilde k(k))$. Hence, the sign of the function $\tha(y,t,k)$ can be obtained in terms of $\tha(\tilde \x,\tilde k(k))$ from $\tilde k(k)$ to $k$.
\par
Now, let us consider the critical point of $\tha(\tilde \x,\tilde k(k))$,
\be
\frac{d\tha}{d\tilde k}=\tilde \x-\frac{2(1-4\tilde k^2)}{(1+4\tilde k^2)^2}.
\ee
Letting $\frac{d\tha}{d\tilde k}=0$ and denote $\tilde k^2=s$, then, we have
\be
16\tilde \x s^2+8(\tilde \x+1)s+(\tilde \x-2)=0.
\ee
This is a quadratic algebra equation of $s$, it is easily to see that there are four different cases of the sign table of $\tha(y,t,k)$.
\begin{itemize}
  \item Case 1: $\xi>2$. In this case, the solution $u(x,t)$ of the mCH equation is decaying fast as $t\rightarrow \infty$.
  
  \item Case 2: $0<\xi<2$. In this case, the asymptotic behavior of the solution $u(x,t)$ is as follows,
\be\label{uxtasycase2}
u(x,t)=\frac{\sqrt{-2\gam_0}}{\sqrt{\tilde k_1 t(3-4\tilde k_1^2)}}\cos{(\phi_0)}+O\left(\frac{\ln{(t)}}{t}\right),
\ee
where
\be
\ba{l}
\tilde k_1=\sqrt{\frac{\sqrt{1+4\xi}-(1+\xi)}{4\xi}},\quad k_0=-2\tilde k_1+\sqrt{1+4\tilde k^2_1},\\
\gam(k)=-\frac{1}{2\pi}\ln{(1+|r(k)|^2)},\quad \gam_0=\gam(k_0),\\
\phi_0=\frac{\pi}{4}+{\mbox arg}(r(k_0))+{\mbox arg}(\Gam(i\gam_0))+2\gam_0\ln{\frac{(1+4\tilde k^2_1)^{3/2}}{128\tilde k^3_1t(3-4\tilde k^2_1)^{1/2}}}\\
{}\quad -\frac{1}{\pi}\int_{k_0}^{\frac{1}{k_0}}\ln\left(\frac{1+|r(s)|^2}{1+|r(k_0)|^2}\right)\frac{-2k_0}{s^2-k^2_0}ds-2\tilde k_1(x-c_+-\frac{2t}{1+4\tilde k^2_0}),\\
c_+=-\frac{2}{\pi}\int_{k_0}^{\frac{1}{k_0}}\frac{\ln(1+|r(s)|^2)}{1+s^2}ds.
\ea
\ee
    
  \item Case 3: $-\frac{1}{4}<\xi<0$. In this case,  the asymptotic behavior of the solution $u(x,t)$ is as follows,

                \be\label{uxtasycase3}
u(x,t)=\frac{\sqrt{-2\gam_1}}{\sqrt{\tilde k_1 t(3-4\tilde k_1^2)}}\cos{(\phi_1)}-\frac{\sqrt{-2\gam_2}}{\sqrt{\tilde k_2 t(3-4\tilde k_2^2)}}\cos{(\phi_2)}+O\left(\frac{\ln{(t)}}{t}\right),
\ee 
where
\be
\ba{l}
\tilde k_2=\sqrt{\frac{-\sqrt{1+4\xi}-(1+\xi)}{4\xi}},\quad k_1=k_0, \quad k_2=-2\tilde k_2+\sqrt{1+4\tilde k^2_2},\\
\gam_1=\gam(k_1),\quad \gam_2=\gam(k_2),\\
\phi_1=\frac{\pi}{4}+{\mbox arg}(r(k_1))+{\mbox arg}(\Gam(i\gam_1))+2\gam_1\ln{\frac{(1+4\tilde k^2_1)^{3/2}}{128\tilde k^3_1t(3-4\tilde k^2_1)^{1/2}}}\\
{}\quad -\frac{1}{\pi}\int_{k_1}^{\frac{1}{k_1}}\ln\left(\frac{1+|r(s)|^2}{1+|r(k_1)|^2}\right)\frac{-2k_1}{s^2-k^2_1}ds-\frac{1}{\pi}\int_{-k_2}^{k_2}\ln\left(\frac{1+|r(s)|^2}{1+|r(k_2)|^2}\right)\frac{ds}{s-\frac{1}{k_1}}\\
{}\quad-\frac{1}{\pi}\int_{\frac{1}{k_2}}^{+\infty}\ln(\frac{s+\frac{1}{k_1}}{s-\frac{1}{k_1}})d\ln(1+|r(s)|^2)+2\gam_2\ln\left(\frac{(1-k_1k_2)(k_1+k_2)}{(1+k_1k_2)(k_1-k_2)}\right)\\
{}\quad -2\tilde k_1(x-\tilde c_+-\frac{2t}{1+4\tilde k^2_1}),\\

\phi_2=\frac{\pi}{4}-{\mbox arg}(r(k_2))+{\mbox arg}(\Gam(i\gam_2))-2\gam_2\ln{\frac{(1+4\tilde k^2_2)^{3/2}}{128\tilde k^3_2t(3-4\tilde k^2_2)^{1/2}}}\\
{}\quad -\frac{1}{\pi}\int_{k_1}^{\frac{1}{k_1}}\ln\left(\frac{1+|r(s)|^2}{1+|r(k_2)|^2}\right)\frac{-2k_2}{s^2-k^2_2}ds-\frac{1}{\pi}\int_{-k_2}^{k_2}\ln\left(\frac{1+|r(s)|^2}{1+|r(k_2)|^2}\right)\frac{ds}{s-\frac{1}{k_2}}\\
{}\quad-\frac{1}{\pi}\int_{\frac{1}{k_2}}^{+\infty}\ln(\frac{s+\frac{1}{k_2}}{\frac{1}{k_2}-s})d\ln(1+|r(s)|^2)-2\gam_1\ln\left(\frac{(1-k_1k_2)(k_1+k_2)}{(1+k_1k_2)(k_1-k_2)}\right)\\
{}\quad -2\tilde k_2(x-\tilde c_+-\frac{2t}{1+4\tilde k^2_2}),\\

\tilde c_+=-\frac{2}{\pi}\int_{k_1}^{\frac{1}{k_1}}\frac{\ln(1+|r(s)|^2)}{1+s^2}ds-\frac{2}{\pi}\int_{\frac{1}{k_2}}^{+\infty}\frac{\ln(1+|r(s)|^2)}{1+s^2}ds
-\frac{1}{\pi}\int_{-k_2}^{k_2}\frac{\ln(1+|r(s)|^2)}{1+s^2}ds.
\ea
\ee          
  
  \item Case 4: $\xi<-\frac{1}{4}$. In this case, the solution $u(x,t)$ of the mCH equation is decaying fast as $t\rightarrow \infty$.
\end{itemize}

\section{Soliton solutions}
To obtain the soliton solutions of the mCH equation, we need consider the zeros of the function $a(k)$. In the following, we assume that $a(k)$ has finite $N$  simple zeros which lie on the upper-plane of complex $k$. 

\subsection{Residue conditions}
\par
From the symmetry conditions (\ref{symrel}) and (\ref{s1krelsk}), we know that if $k_j\in D_1$ is a zero of $a(k)$, then so are $-\bar k_j, -\frac{1}{k_j}$ and $\frac{1}{\bar k_j}$.
\par
So, in view of the definition of $s(k)$ and $M(y,t,k)$, we have the following residue condition,
\be\label{rescon1}
\mbox{Res}_{k=k_j}[M(y,t,k)]_1=c_je^{Q(k_j)(y-\frac{2t}{\lam^2(k_j)})}[M(y,t,k_j)]_2,
\ee
with some constant $c_j$.
\par
Then, by the symmetry conditions (\ref{symrel}) and (\ref{s1krelsk}), again, we have,
\begin{subequations}\label{rescon2}
\be
\mbox{Res}_{k=-\bar k_j}[M(y,t,k)]_1=\bar c_je^{\ol{Q(k_j)(y-\frac{2t}{\lam^2(k_j)})}}[M(y,t,-\bar k_j)]_2,
\ee
\be
\mbox{Res}_{k=-\frac{1}{k_j}}[M(y,t,k)]_1=-\frac{c_j}{k^2_j}e^{Q(k_j)(y-\frac{2t}{\lam^2(k_j)})}[M(y,t,-\frac{1}{k_j})]_2,
\ee
\be
\mbox{Res}_{k=\frac{1}{\bar k_j}}[M(y,t,k)]_1=-\frac{\bar c_j}{\bar k^2_j}e^{\ol{Q(k_j)(y-\frac{2t}{\lam^2(k_j)})}}[M(y,t,\frac{1}{\bar k_j})]_2,
\ee
\end{subequations}

\subsection{Soliton solutions}

Again, by the symmetry conditions (\ref{symrel}), to obtain the soliton solutions of the mCH equation, we seek the solution $\tilde M(y,t,k)$ of the Riemann-Hilbert problem as follows,
\be
\tilde M(y,t,k)=\left(f(k),\quad f(-k)\right),
\ee
where $f(k)$ is a function which we omit the variables $(y,t)$.
\par
If we denote $Q(k_j)(y-\frac{2t}{\lam^2(k_j)})$ by $\phi_j$, then from (\ref{rescon1}) and (\ref{rescon2}),
\be\label{fkdef}
f(k)=1+\sum_{j=1}^{N}\left(\frac{c_je^{\phi_j}}{k-k_j}f(-k_j)+\frac{\bar c_j e^{\bar \phi_j}}{k+\bar k_j}f(\bar k_j)+\frac{-\frac{c_j}{k^2_j}e^{\phi_j}}{k+\frac{1}{k_j}}f(\frac{1}{k_j})+\frac{-\frac{\bar c_j}{\bar k^2_j}e^{\bar \phi_j}}{k-\frac{1}{\bar k_j}}f(-\frac{1}{\bar k_j})\right).
\ee
Evaluation at $-k_j,\bar k_j,\frac{1}{k_j}$ and $-\frac{1}{\bar k_j}$, respectively, yield
\be\label{fksys}
\left\{
\ba{l}
f(-k_j)=1+\sum_{j=1}^{N}\left(\frac{c_je^{\phi_j}}{-k_j-k_j}f(-k_j)+\frac{\bar c_j e^{\bar \phi_j}}{-k_j+\bar k_j}f(\bar k_j)+\frac{-\frac{c_j}{k^2_j}e^{\phi_j}}{-k_j+\frac{1}{k_j}}f(\frac{1}{k_j})+\frac{-\frac{\bar c_j}{\bar k^2_j}e^{\bar \phi_j}}{-k_j-\frac{1}{\bar k_j}}f(-\frac{1}{\bar k_j})\right)\\
f(\bar k_j)=1+\sum_{j=1}^{N}\left(\frac{c_je^{\phi_j}}{\bar k_j-k_j}f(-k_j)+\frac{\bar c_j e^{\bar \phi_j}}{\bar k_j+\bar k_j}f(\bar k_j)+\frac{-\frac{c_j}{k^2_j}e^{\phi_j}}{\bar k_j+\frac{1}{k_j}}f(\frac{1}{k_j})+\frac{-\frac{\bar c_j}{\bar k^2_j}e^{\bar \phi_j}}{\bar k_j-\frac{1}{\bar k_j}}f(-\frac{1}{\bar k_j})\right)\\
f(\frac{1}{k_j})=1+\sum_{j=1}^{N}\left(\frac{c_je^{\phi_j}}{\frac{1}{k_j}-k_j}f(-k_j)+\frac{\bar c_j e^{\bar \phi_j}}{\frac{1}{k_j}+\bar k_j}f(\bar k_j)+\frac{-\frac{c_j}{k^2_j}e^{\phi_j}}{\frac{1}{k_j}+\frac{1}{k_j}}f(\frac{1}{k_j})+\frac{-\frac{\bar c_j}{\bar k^2_j}e^{\bar \phi_j}}{\frac{1}{k_j}-\frac{1}{\bar k_j}}f(-\frac{1}{\bar k_j})\right)\\
f(-\frac{1}{\bar k_j})=1+\sum_{j=1}^{N}\left(\frac{c_je^{\phi_j}}{-\frac{1}{\bar k_j}-k_j}f(-k_j)+\frac{\bar c_j e^{\bar \phi_j}}{-\frac{1}{\bar k_j}+\bar k_j}f(\bar k_j)+\frac{-\frac{c_j}{k^2_j}e^{\phi_j}}{-\frac{1}{\bar k_j}+\frac{1}{k_j}}f(\frac{1}{k_j})+\frac{-\frac{\bar c_j}{\bar k^2_j}e^{\bar \phi_j}}{-\frac{1}{\bar k_j}-\frac{1}{\bar k_j}}f(-\frac{1}{\bar k_j})\right).
\ea
\right.
\ee
Solving this algebraic system for $f(-k_j),f(\bar k_j),f(\frac{1}{k_j})$ and $f(-\frac{1}{\bar k_j})$, $j=1,2,\dots, N$, and and substituting
the solution into (\ref{fkdef}) yields an explicit expression for $f(k)$. This solves the
Riemann-Hilbert problem for $\tilde M(y,t,k)$. Therefore, by (\ref{usol}), we can obtain an parametric expression for $u(x,t)$.
\par
In the following, we assume $N=1$ and derive an explicit formula for the one-soliton solution in two special cases. 
\subsubsection{One-soliton solution for $|k_1|=1$}
In this case, there are two zeros of $a(k)$, i.e., one is $k_1$, another is $-\bar k_1$. It yields that the algebraic system (\ref{fksys}) reduces to the following two equations
\be\label{fksyscase1}
\left\{
\ba{l}
f(-k_1)=1+\frac{c_1e^{\phi_1}}{-k_1-k_1}f(-k_1)+\frac{\bar c_1 e^{\bar \phi_1}}{-k_1+\bar k_1}f(\bar k_1)\\
f(\bar k_1)=1+\frac{c_1e^{\phi_1}}{\bar k_1-k_1}f(-k_1)+\frac{\bar c_1 e^{\bar \phi_1}}{\bar k_1+\bar k_1}f(\bar k_1).
\ea
\right.
\ee
If denote $k_1=e^{i\alpha_1}=a_1+ib_1$, $c_1=|c_1|e^{iC_1}$, $\phi_1=\psi_1+i\psi_2$, where $\alpha_1,a_1,b_1,C_1,\psi_1,\psi_2$ are some real constants, then solving the system (\ref{fksyscase1}) for $f(-k_1)$ and $f(\bar k_1)$, we have,
\begin{subequations}\label{fkcase1}
\be
f(-k_1)=\frac{1-\frac{a_1}{2ib_1}|c_1|e^{\psi_1}e^{-i(C_1+\psi_2-\alpha_1)}}{1+\frac{a^2_1}{4b^2_1}|c_1|^2e^{2\psi_1}+i|c_1|e^{\psi_1}\sin(C_1+\psi_2-\alpha_1)},
\ee
\be
f(\bar k_1)=\frac{1-\frac{a_1}{2ib_1}|c_1|e^{\psi_1}e^{i(C_1+\psi_2-\alpha_1)}}{1+\frac{a^2_1}{4b^2_1}|c_1|^2e^{2\psi_1}+i|c_1|e^{\psi_1}\sin(C_1+\psi_2-\alpha_1)},
\ee
\end{subequations}
If we choose $|c_1|=\frac{2b_1}{a_1}\mbox{sgn}(a_1)$ and $C_1+\psi_2-\alpha_1=\frac{\pi}{2}$, then (\ref{fkcase1}) can be written as
\be
f(-k_1)=H_1+H_2,\quad f(\bar k_1)=H_1-H_2,
\ee
where
\be
H_1=\frac{1}{1+e^{2\psi_1}+i\frac{2b_1}{a_1}\mbox{sgn}(a_1)e^{\psi_1}},\quad H_2=\frac{\mbox{sgn}(a_1)e^{\psi_1}}{1+e^{2\psi_1}+i\frac{2b_1}{a_1}\mbox{sgn}(a_1)e^{\psi_1}}.
\ee
Hence, by (\ref{usol}), we get
\be\label{usolcase1}
u(y,t)=-\frac{4b_1}{a^3_1}\mbox{sgn}(a_1)\frac{\cosh(\psi_1)}{\cosh(2\psi_1)+\frac{1+b^2_1}{1-b^2_1}},
\ee
\be
c_+=\ln\left(\frac{1+\frac{1-b}{1+b}e^{2\psi_1}}{1+\frac{1+b}{1-b}e^{2\psi_1}}\right), y=x-c_+,
\ee
\be
m(y,t)=-2a_1b_1\mbox{sgn}(a_1)\frac{\mbox{sech}(\psi_1)}{a^2_1-2b^2_1\mbox{sech}^2(\psi_1)}.
\ee

\subsubsection{One-soliton solution for $k_1\in i\R$}
In this case,  there are two zeros of $a(k)$, i.e., one is $k_1$, another is $-\frac{1}{k_1}$. It yields that the algebraic system (\ref{fksys}) reduces to the following two equations
\be\label{fksyscase2}
\left\{
\ba{l}
f(-k_1)=1+\frac{c_1e^{\phi_1}}{-k_1-k_1}f(-k_1)+\frac{-\frac{c_1}{k^2_1} e^{\phi_1}}{-k_1+\frac{1}{k_1}}f(\frac{1}{k_1})\\
f(\frac{1}{k_1})=1+\frac{c_1e^{\phi_1}}{\frac{1}{k_1}-k_1}f(-k_1)+\frac{-\frac{c_1}{k^2_1} e^{\phi_1}}{\frac{1}{k_1}+\frac{1}{k_1}}f(\frac{1}{k_1}).
\ea
\right.
\ee
If we denote $k_1=ib$, where $b<1$, then solving the system (\ref{fksyscase2}) for $f(-k_1)$ and $f(\frac{1}{k_1})$, we have,
\begin{subequations}\label{fkcase2}
\be
f(-k_1)=\frac{1-\frac{\frac{1}{b}-b}{2ib(b+\frac{1}{b})}c_1e^{\phi_1}}{1-\frac{(b-\frac{1}{b})^2}{4b^2(b+\frac{1}{b})^2}c^2_1e^{2\phi_1}+\frac{c_1e^{\phi_1}}{ib}}
\ee
\be
f(\frac{1}{k_1})=\frac{1+\frac{\frac{1}{b}-b}{2ib(b+\frac{1}{b})}c_1e^{\phi_1}}{1-\frac{(b-\frac{1}{b})^2}{4b^2(b+\frac{1}{b})^2}c^2_1e^{2\phi_1}+\frac{c_1e^{\phi_1}}{ib}}
\ee
\end{subequations}
If we choose $c_1=\frac{2b(1+b^2)}{1-b^2}$, then (\ref{fkcase2}) and (\ref{usol}) yield
\be\label{usolcase2}
u(y,t)=-\frac{16b^2(1+b^2)}{(b^2-1)^3}\frac{\sinh(\phi_1)}{\cosh(2\phi_1)+\frac{(1+b^2)^2+4b^2}{(1-b^2)^2}},
\ee
\be
c_+=\ln\left(\frac{1+\left(\frac{1-b}{1+b}\right)^2e^{2\phi_1}}{1+\left(\frac{1+b}{1-b}\right)^2e^{2\phi_1}}\right), y=x-c_+,
\ee
\be
m(y,t)=4\frac{1+b^2}{1-b^2}\frac{\sinh(\phi_1)}{\cosh(2\phi_1)-\frac{3(1+b^4)+2b^2}{(1-b^2)^2}}.
\ee
\begin{remark}
The one-soliton solution obtained in (\ref{usolcase1}) and (\ref{usolcase2}) are the same as the equations (3.2a) and (3.9a) of \cite{matsuno2014} which were obtained by bilinear form method.
\end{remark}

\bigskip
{\bf Acknowledgements}
This work of Xu was supported by National Science Foundation of China under project NO.11971313 and Shanghai natural science foundation under project NO.19ZR1434500. Fan was support by grants from the National Science Foundation of China under project NO. 11671095, 51879045. We would like to thank Jonatan Lenells at KTH, Zhijun Qiao at UTGA, Ling Huang at KTH and Fudan University, Caiqing Song at USST, for their useful discussions and suggestions. We also thank the organizers of the forth China-Japan Joint Workshop on Integrable Systems held on August 19-22, 2019 for giving us an chance to report this work.

\end{document}